\documentclass[11pt,a4paper]{article}
\usepackage[T1]{fontenc}
\usepackage[utf8]{inputenc}
\usepackage{lmodern,microtype}
\usepackage[margin=23mm]{geometry}
\usepackage{amsmath,amssymb,amsthm,booktabs,tabularx,array}
\usepackage{xcolor,enumitem,fancyhdr}
\usepackage{xurl}
\usepackage[colorlinks=true,linkcolor=blue!45!black,urlcolor=blue!45!black,citecolor=blue!45!black]{hyperref}
\hypersetup{pdftitle={Exact Locality Gaps for Matchable Semi-Matchings},pdfauthor={Marek Gałązka and Hanna Wdowicka}}
\setlist{itemsep=2pt,topsep=4pt}
\newtheorem{theorem}{Theorem}
\newtheorem{lemma}[theorem]{Lemma}
\newtheorem{corollary}[theorem]{Corollary}
\newcommand{\OPT}{\operatorname{OPT}}
\newcommand{\gap}{\Gamma^{\mathrm{match}}}
\newcommand{\eps}{\varepsilon}

\begin{document}
\begin{center}
{\LARGE\bfseries Exact Locality Gaps for\\[4pt] Matchable Semi-Matchings}\\[10pt]
{\large Marek Gałązka\textsuperscript{a,*}\qquad Hanna Wdowicka\textsuperscript{b}}\\[8pt]
{\small\textsuperscript{a}Faculty of Mathematics and Computer Science,\\
Adam Mickiewicz University, Uniwersytetu Poznańskiego 4,\\
61-614 Poznań, Poland}\\[5pt]
{\small\textsuperscript{b}Department of Statistics,\\
Poznań University of Economics and Business,\\
Al. Niepodległości 10, 61-875 Poznań, Poland}\\[6pt]
{\small\textsuperscript{*}Corresponding author:
\href{mailto:galazka@amu.edu.pl}{galazka@amu.edu.pl}}\\[2pt]
{\small\href{mailto:hanna.wdowicka@ue.poznan.pl}{hanna.wdowicka@ue.poznan.pl}}
\end{center}

\begin{abstract}
An assignment of tasks to servers can resist every small improvement and
still make tasks wait longer than necessary. We determine exactly how
inefficient such an assignment can be when each task requires one unit of
service and the eligibility constraints permit all tasks to use distinct
servers. For every move size $r$ and maximum current server load $K$, we
give a closed formula for the worst ratio between locally optimal and
globally optimal total completion time. Local optimality here allows every
feasible reassignment changing at most $r$ tasks. Every finite-cap bound is
attained on a tree where each task has at most two eligible servers. Thus
the worst behavior already occurs under simple eligibility constraints.
At load cap two, the exact ratio is $1+1/(r+2)$, attained on a path with
$r+2$ tasks. Without a load cap, the worst-case supremum is $3/2$ for
single-task moves and approximately $1.294503159$ for two-task moves; its
excess above one is $1/(r+2)+O(2^{-r}/r)$ as $r$ grows. The proof uses an
explicit rational potential on a comparison graph and matching extremal
constructions. These results give sharp guarantees for bounded-size local
search on matchable semi-matchings, including exact guarantees under
degree bounds.
\end{abstract}

\noindent\textbf{Keywords:} Semi-matching; local search; locality gap;
load balancing; potential method; extremal trees.

\section{Introduction}

Suppose that a collection of tasks must be assigned to servers, but each
task can use only certain servers. Every task takes one unit of time, and
each server processes its assigned tasks one after another. Three tasks
sent to one server finish at times one, two and three, with total completion
time six. If they can instead use three different servers, their total
completion time is three. Eligibility restrictions therefore create a
basic optimization problem: distribute the tasks so that they finish as
early as possible in total.

A natural local-search procedure repeatedly changes the servers of a few
tasks whenever this lowers the total cost. It can stop even though a larger
coordinated change would help. We calculate the exact largest loss that
can remain at such a stopping point, assuming that the eligibility
restrictions permit all tasks to use distinct servers. The answer depends
on how many tasks may move together and on how many tasks currently share
one server. Our worst examples are trees with at most two choices per
task, and even a path exhibits the leading loss. Thus this obstruction to
local improvement is present in very simple assignment problems.

For a small example, take three tasks and three servers. The first task
can use only the first server; the second can use the first or second;
the third can use the second or third. Assign the first two tasks to the
first server and the remaining task to the second. The total completion
time is four. Moving either eligible task alone leaves that cost unchanged,
but moving the second and third tasks together places one task on each
server and lowers the cost to three. This example attains our exact
$4/3$ bound for single-task moves when no server has more than two tasks.

The graph formulation of this problem is a \emph{semi-matching}: tasks,
which we call clients from now on, form one side of a bipartite graph,
servers form the other, and edges specify eligibility. An assignment has
cost $\sum_v a_v(a_v+1)/2$, where $a_v$ is the number of clients assigned
to server $v$. An instance is \emph{matchable} when a matching covers every
client, so an assignment with no shared server is feasible. The
\emph{locality gap} is the worst ratio of the cost of a locally optimal
assignment to the global optimum. Semi-matchings and their improvement
paths were developed in \cite{harvey,czygrinow}.

The usual alternating-path move transfers one unit of load from a starting
server to an ending server through a chain of client reassignments,
preserving all intermediate loads. The move improves the cost precisely when the starting
load exceeds the ending load by at least two. We prove that excluding
improving paths moving at most $r$ clients is equivalent to excluding
\emph{every} improving reassignment of at most $r$ clients. This makes the
path-based guarantee a guarantee for the full bounded-size neighborhood.
Czygrinow et al.\ \cite{czygrinow} analyze short improving paths in a
distributed approximation framework. Their Theorem 2 gives $1+1/r$ on
matchable instances. We replace that upper bound by the exact worst-case
ratio on this class, with a matching example for every finite load cap.

\subsection{Our results and proof method}

Our main theorem gives a closed formula for the worst locality gap as a
function of the radius $r$ and maximum assigned load $K$. Every finite-cap
value is attained by a finite bipartite tree with client degree at most two
and server degree at most $K$. Consequently, the same formula gives the
exact gap under a maximum-degree bound. At $K=2$, a path with $r+2$ clients
attains $1+1/(r+2)$. Without a load cap, the exact supremum is the limit of
the finite-cap formula; it is not attained by a finite instance and has
excess $1/(r+2)+O(2^{-r}/r)$ above one. We also show that excluding short
improving paths is equivalent to local optimality under all feasible
reassignments changing at most $r$ clients.

The upper bound compares a given assignment with a client-covering
matching. This produces a directed graph with one incoming arc per server
and outdegree equal to the current load. A state records the distance to a
recent ancestor whose load is one higher. We assign an explicit rational
potential to each state and prove inequalities that telescope over the
comparison graph. A layered tree construction makes the resulting bound
tight. The argument is finite for every $r,K$ and includes directed cycles
and self-loops in the comparison graph.

\subsection{Related work}

Caragiannis et al.\ \cite{caragiannis} study the total latency of Nash
equilibria and greedy online assignments in load-balancing games with
linear latency functions. For identical latency $\ell(a)=a$, the social
cost is $Q(M)=\sum_v a_v^2$, whereas our objective satisfies
$C(M)=(Q(M)+n)/2$. Their affine-latency model also includes
$\ell(a)=(a+1)/2$, whose total latency is exactly $C(M)$ and whose unilateral
stability condition agrees with $1$-non-swappability. Thus the objective
alone does not distinguish the present setting. Our theorem determines
the exact gap as a function of both $r$ and $K$ under the additional
assumption of a client-covering matching.

Potential certificates and optimization-based bounds also appear in the
congestion-game literature. The full version of Ravindran Vijayalakshmi
and Skopalik~\cite{sv} gives a related one-step certificate in Lemma 4 and
a dual-based realization by singleton choices in Lemma 28.
Paccagnan et al.\ \cite{paccagnan} develop optimization-based certificates
for congestion games. Our contribution is the explicit radius-dependent
certificate and matching extremal family for the class considered here.

Correa and Mu\~noz~\cite{correa} study single-job jumps and pairwise
interchanges for minsum scheduling. Their unit-job example in Section 5.4
has ratio $311/204$, and they report larger examples reaching $1.533$.
Its optimum includes load-two machines, placing it outside the matchable
class. Pairwise interchanges and alternating chains also define different
neighborhoods. Kashaev~\cite{kashaev}, Section 6, develops dual-fitting
analyses of single-job local search and a surrogate-potential algorithm.
These works provide context for the one-move case; the parameters here
describe simultaneous reassignments of up to $r$ clients and a cap on the
current load.

Distributed token-dropping methods give another algorithmic context for
stable assignments~\cite{brandt}. We analyze the quality of an assignment
once the radius condition holds. The matchability assumption is central
to the comparison graph; instances whose optimum requires a load above
one are outside the scope of the theorem.

\section{Model and exact statement}

Let $G=(U,V,E)$ be a finite bipartite graph, where $U$ is the client set,
$V$ is the server set, and $n=|U|>0$. A semi-matching $M$ assigns every
client to one adjacent server. Write $a_v$ for its load at server $v$, and put
\[
 c_a=\frac{a(a+1)}2,\qquad C(M)=\sum_{v\in V}c_{a_v}.
\]
The optimum $\OPT(G)$ minimizes this cost over all assignments. We call
the instance \emph{matchable} if it admits a matching covering $U$.
Throughout the paper we assume matchability. Consequently
$\OPT(G)=n$: the matching has cost $n$, and $c_a\geq a$ for every load.
The numerical condition $n\leq|V|$ alone does not imply matchability.

An alternating path moving $t$ clients has $2t$ bipartite edges. All
intermediate server loads remain unchanged; the starting load $a$ loses
one and the ending load $b$ gains one. The cost reduction is
\[
 c_a+c_b-c_{a-1}-c_{b+1}=a-b-1.
\]
We say that $M$ is \emph{$r$-non-swappable} if there is no such improving
path with $t\leq r$. This agrees with the path-length convention of
Czygrinow et al.\ \cite{czygrinow}. Their Theorem 2 yields $1+1/r$ on the
matchable class. The radius $r$ bounds the number of moved clients, not
the number of bipartite edges.

Define $\gap_{r,K}$ as the supremum of $C(M)/\OPT(G)$ over matchable
instances and $r$-non-swappable assignments with $\max_v a_v\leq K$.
Define $\gap_r$ by omitting this load cap. Here $K$ bounds the
\emph{current assigned load}; it is not a graph-degree bound.

\begin{lemma}[Equivalence with bounded client exchanges]\label{lem:exchange}
An assignment is $r$-non-swappable if and only if no feasible reassignment
changing the assigned server of at most $r$ clients has smaller total cost.
This equivalence holds without the matchability assumption.
\end{lemma}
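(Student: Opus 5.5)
The easy direction is immediate. An improving alternating path moving $t\le r$ clients is itself a feasible reassignment of at most $r$ clients. By the displayed identity it lowers the cost by $a-b-1>0$. So if no bounded reassignment improves $M$, then $M$ is $r$-non-swappable.

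For the converse, suppose $M'$ is a feasible assignment that differs from $M$ on a set $S$ of at most $r$ clients and satisfies $C(M')<C(M)$. We must produce an improving alternating path using at most $r$ of the moved clients. Form the multidigraph $D$ on $V$ with one arc $M(u)\to M'(u)$ for each $u\in S$. The outdegree of $v$ minus its indegree is $a_v-a'_v$. Standard Eulerian decomposition splits the arcs of $D$ into directed cycles and directed paths. Each path starts at a server with $a_v>a'_v$ and ends at one with $a_v<a'_v$. Moreover, at every server $v$ the number of paths starting there is $\max(a_v-a'_v,0)$, and the number ending there is $\max(a'_v-a_v,0)$, with no server both a start and an end. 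Each path or cycle, read through its clients, is an alternating path (resp.\ cycle) in $G$ relative to $M$, and the pieces are arc-disjoint. Hence each path uses at most $|S|\le r$ clients. Cycles do not change loads, so we discard them. Applying only the paths still transforms $M$ into $M'$ load-wise, so the cost comparison depends only on the path endpoints.

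The main step is to show that some single path improves $M$, i.e.\ has start load $a$ and end load $b$ (in $M$) with $a-b\ge 2$. Suppose not: every path from $s$ to $e$ has $a_s\le a_e+1$. I would argue via convexity. We have
\[
C(M)-C(M')=\sum_v\bigl(c_{a_v}-c_{a'_v}\bigr).
\]
A server losing $k$ units contributes $\sum_{j=0}^{k-1}(a_v-j)\le k\,a_v$. A server gaining $k$ units contributes $-\sum_{j=1}^{k}(a_v+j)\le -k(a_v+1)$. Charge each path its start's $a_s$ and its end's $-(a_e+1)$; by the endpoint counts above this charging is exact. This gives
\[
C(M)-C(M')\le\sum_{\text{paths}}\bigl(a_s-a_e-1\bigr)\le 0,
\]
contradicting $C(M')<C(M)$. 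So some path $P$ has $a_s-a_e-1\ge1$. Applying $P$ alone to $M$ keeps intermediate loads fixed, reduces the cost by $a_s-a_e-1>0$, and moves at most $r$ clients. This contradicts $r$-non-swappability.

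The bookkeeping in the path decomposition is where care is needed. Each server must be a path start exactly $\max(a_v-a'_v,0)$ times and never both start and end, so that the per-unit bounds $a_v$ and $-(a_v+1)$ are valid. Greedy peeling of maximal walks from surplus vertices (then shortcutting repeated vertices into cycles) gives this. Feasibility of $P$ is automatic, since its edges are edges of $M$ and $M'$. Nothing here uses matchability, as the statement claims.
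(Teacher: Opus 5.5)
Your proof is correct and follows the paper's argument step for step: the same multidigraph of changed clients, the same flow decomposition into cycles and donor--receiver paths, and the same per-unit bounds $a_s$ and $a_e+1$ giving $C(M)-C(M')\le\sum(a_s-a_e-1)$. The only cosmetic difference is that you make the paths simple during the decomposition, whereas the paper removes repeated-vertex segments only from the chosen improving path afterwards.
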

\begin{proof}
An improving alternating path moving at most $r$ clients is itself such a
reassignment. Conversely, suppose a feasible assignment $N$ changes
$t\leq r$ clients and has smaller cost. For each changed client, draw an
arc from its server under $M$ to its server under $N$. Write $a_v,b_v$ for
the respective loads. The resulting directed multigraph has $t$ arcs and
excess outdegree $a_v-b_v$ at $v$.

Decompose its arcs into directed cycles and paths from net donors
$x$, where $a_x>b_x$, to net receivers $y$, where $b_y>a_y$.
This is the usual integral flow decomposition: add a source supplying
$a_x-b_x$ units at each donor and a sink receiving $b_y-a_y$ units at each
receiver, then remove source--sink paths and residual cycles. Cycles do not
change any load. Each donor--receiver path pairs one unit removed from its
donor with one unit added to its receiver. A removed unit saves at most
$a_x$, while an added unit costs at least $a_y+1$. Consequently
\[
 C(M)-C(N)\leq\sum_{\text{donor--receiver paths }x\leadsto y}
 (a_x-a_y-1).
\]
Since the left side is positive, some path has $a_x\geq a_y+2$.
Remove any repeated-vertex segments from that path. The remaining simple
server path has at most $t$ arcs, uses distinct changed clients, and lifts
to an improving alternating path in the original bipartite graph. This
contradicts $r$-non-swappability.
\end{proof}

Equivalently, this is unit-time, unit-weight scheduling on restricted
identical machines with total completion time as objective and an optimum
of makespan one. In congestion-game terminology the objective is Rosenthal's
potential for identical latency functions $\ell(a)=a$, rather than the usual
sum of player latencies. If $Q(M)=\sum_v a_v^2$, then $Q(M)=2C(M)-n$;
on the matchable class the corresponding exact squared-load gap is
$2\gap_{r,K}-1$.

\begin{theorem}[Exact locality gaps]\label{thm:main}
For every integer $r\geq1$, $\gap_{r,1}=1$. For $K\geq2$, let
\begin{align}
 E_{r,K}&=\sum_{a=2}^{K-1}\frac1{(a!)^r}, &
 T_{r,K}&=\sum_{a=2}^{K-1}\frac1{a(a-1)(a!)^r},\label{eq:sums}\\
 Z_{r,K}&=r+2-T_{r,K}.\nonumber
\end{align}
Empty sums are zero. Then
\begin{equation}\label{eq:finite}
 \boxed{\gap_{r,K}=1+\frac{2+E_{r,K}}{2Z_{r,K}}.}
\end{equation}
For every $r\geq1$, $K\geq2$, this value is attained by a finite instance
whose eligibility graph is a tree, with client degree at most two and
server degree at most $K$.

Writing $E_r=\sum_{a=2}^{\infty}(a!)^{-r}$ and
$T_r=\sum_{a=2}^{\infty}[a(a-1)(a!)^r]^{-1}$, we have
\begin{equation}\label{eq:infinite}
 \boxed{\gap_r=\beta_r:=1+\frac{2+E_r}{2(r+2-T_r)}.}
\end{equation}
This unrestricted-load supremum is approached by the finite tree constructions
as $K\to\infty$ and is not attained by any finite instance. Thus restricting
the finite-cap or unrestricted supremum to matchable tree instances does
not change its value.
\end{theorem}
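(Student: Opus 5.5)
The plan has two halves: an upper bound certified by a potential on the comparison graph, and matching layered-tree constructions. The case $K=1$ is immediate, since every load is at most one, so $C(M)=n=\OPT(G)$. For $K\ge2$ I fix a matchable instance, an $r$-non-swappable $M$ with loads $a_v\le K$, and a client-covering matching $N$. The comparison graph $H$ has vertex set the servers. Each client $u$ gives an arc from $N(u)$ to $M(u)$, oriented so that each server's outdegree is its current load $a_v$, while each server has indegree at most one; padding with dummy self-loops or adjusting the convention gives exactly one incoming arc. Each weak component of $H$ is then a functional-graph-like structure: an in-tree rooted on a single directed cycle, where the cycle may be a self-loop. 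Following arcs backwards from $v$ yields a unique ancestor chain $v=v_0,v_1,v_2,\dots$. An alternating path of $t$ clients along this chain transfers one unit of load from some ancestor $v_j$ with $j\le r$ down to $v$, and such a path is feasible. Non-swappability therefore gives
\[
a_{v_j}\le a_v+1 \qquad\text{for every } j\le r .
\]
Reversing direction gives the same constraint along descendant paths.

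The target inequality is $C(M)-n=\sum_v (c_{a_v}-a_v)=\sum_v a_v(a_v-1)/2\le (\gap_{r,K}-1)\,n$, where $n=\sum_v a_v$. I would prove it by assigning each server a state $(a_v,d_v)$. Here $d_v\in\{1,\dots,r\}\cup\{\infty\}$ is the distance to the nearest ancestor within distance $r$ of load $a_v+1$; by the displayed constraint no ancestor within distance $r$ has load larger than that. I then seek rational weights $\phi(a,d)$ satisfying a local inequality of the form
\[
\frac{a(a-1)}2-(\gap-1)\,a \;\le\; \sum_{\text{children } w}\phi(\text{state}(w))\;-\;\phi(\text{state}(v)),
\]
or its transposed variant. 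Summing this over all servers telescopes to zero, because every server is the child of exactly one parent.

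The values should come from the extremal structure, which I expect to be a chain. A load-$a$ server feeds a load-$(a+1)$ parent; between them lies a run of $r$ servers at load $a$, each of which needs $a-1$ extra children at lower loads so as not to violate the distance constraint. This branching is what produces the factors $(a!)^{r}$: each level multiplies the subtree size by $a$ per step, over $r$ steps. The terms $\frac{1}{a(a-1)(a!)^r}$ in $T_{r,K}$ then arise as boundary corrections in the sum over states. Concretely, I would solve the local inequalities with equality along the extremal chain. This gives a linear recursion in $d$ for each fixed $a$, whose solution is a geometric-type expression in $1/a$. I would then verify the non-tight states case by case, splitting on whether a child has load $a+1$, $a$, or at most $a-1$. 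The cases of a cycle or a self-loop in $H$ need separate handling, since telescoping around a cycle requires the potential differences around the cycle to sum to at most zero. I expect this to follow because loads along a cycle must be nearly constant. \textbf{Verifying that the guessed potential satisfies all the local inequalities, including the cycle cases, is the main obstacle.} My first attempt would be to reduce everything to monotonicity of $\phi$ in $d$ together with a single worst-case child configuration.

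For the lower bound I would build the layered tree directly. At the top sits a single load-$K$ server. Below it, for each $a$ from $K-1$ down to $2$, hang $r$-long paths of load-$a$ servers with the required branching, and finish with load-$1$ leaves. Each client has exactly two eligible servers, namely its $M$-server and its $N$-server, so the eligibility graph is a tree with client degree at most two and server degree at most $K$. Checking $r$-non-swappability reduces, by Lemma~\ref{lem:exchange}, to the load-difference condition along ancestor paths of length at most $r$, which holds by construction. Computing $C(M)/n$ through the geometric counts should reproduce \eqref{eq:finite}, and at $K=2$ this gives the path with $r+2$ clients. Finally, $\gap_r=\sup_K\gap_{r,K}$ because any finite instance has bounded load. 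Monotone convergence of $E_{r,K}$ and $T_{r,K}$ then gives \eqref{eq:infinite}. Non-attainment follows because a finite instance has some $K$, and $\gap_{r,K}<\beta_r$ strictly: $E_{r,K}$ increases strictly in $K$ while $Z_{r,K}$ stays positive.
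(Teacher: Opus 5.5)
\textbf{Main gap: the upper-bound potential is never constructed.} Your architecture is the paper's: a comparison graph with out-degree equal to load and one parent per server, a state recording a nearby ancestor of load one higher, a telescoping potential, layered trees, and a supremum over $K$. But the upper bound, which is the substance of the theorem, is not proved. You never write down the potential, and you yourself flag its verification as the main obstacle. Three ingredients are missing.
\begin{enumerate}
\item \emph{The exact transition structure.} Shortcutting walks shows that a child of a load-$a$ server has load $b\ge a-1$, and every $b>a$ occurs. So a case split into ``$a+1$, $a$, at most $a-1$'' misses cases. A drop to $a-1$ is possible only when no load-$(a+1)$ ancestor lies within distance $r-1$, and the child then records its parent at distance $1$. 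Equal loads push the recorded ancestor one step further away, and upward moves clear it.
\item \emph{The explicit potential.} In the paper's normalization, $P_{0,r-1}=-R$, $P_{1,j}=-1-(j+2)\eps$, and $P_{a,j}=s_a-a^j\delta_a$ for $a\ge2$. Here $s_a=(R-c_a)/(a-1)$ and $\delta_a=a((a-1)!)^r\sum_{k=a}^{K-1}(\tfrac12+\eps/[k(k-1)])/(k!)^r$. The drop identity $P_{2,0}=c_2-R+2P_{1,r-1}$ holds exactly for the stated $\eps$; this is where $E_{r,K}$ and $T_{r,K}$ enter.
\item \emph{The verification.} It is done per arc, $P_{a,j}\ge c_a-R+aP_{b,k}$, then averaged over the $a$ out-arcs. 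It needs:
\begin{itemize}
\item equality for same-load and drop transitions;
\item slack $(a-1)\delta_a$ for the self-transition;
\item monotonicity in $j$;
\item the interlacing $P_{a,r-2}\ge P_{a+1,0}$, which yields monotonicity in $a$ and covers upward transitions from states with a recorded ancestor.
\end{itemize}
``Monotonicity in $d$ plus one worst-case child'' does not replace these computations.
\end{enumerate}
Cycles, by contrast, need no separate telescoping argument. Once every server has exactly one parent, states are defined along ancestor walks, and the per-vertex inequalities sum to an exact cancellation over all servers. The only cycle-specific point is shortcutting walks with repeated vertices to simple paths before invoking non-swappability. Your expectation that loads on a cycle are nearly constant is false for long cycles, and it is not needed.

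\textbf{Two further steps fail as written.} First, with an arbitrary client-covering $N$, a positive-load server may be uncovered and hence parentless. A dummy self-loop makes its out-degree $a_v+1$, so its inequality is not of the form you verify and needs a separate argument. The paper avoids this by choosing $B$ to cover all clients and all positive-load servers: take one assigned client per loaded server and augment to a maximum matching. It then works on the $n$ covered servers. Also, your arcs must run from $M(u)$ to $N(u)$, not the reverse, for out-degree to equal load.

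Second, the lower-bound sketch does not describe a working instance.
\begin{itemize}
\item Comparison-graph leaves have load $0$, not $1$.
\item The formula needs $r$ layers of load $1$ above a single load-$0$ layer.
\item Every load-$a$ server sends all $a$ clients one layer down, so $N_i=N_{i-1}/a_i$.
\item The top server needs one client eligible only for it.
\end{itemize}
Non-swappability then holds because each movable client's only alternative lies one layer down; this is not what Lemma~\ref{lem:exchange} provides. The ratio comes from the layer masses $w_a=(q_{a-1}-q_a)/(a-1)$, with $q_a=(a!)^{-r}$, and the telescoping sums, which you do not carry out.

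Finally, strict increase of $\gap_{r,K}$ in $K$ needs $Z_{r,K}$ to be nonincreasing, not merely positive.
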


The rest of the proof is finite for every $r,K$: there is no passage to
an infinite-dimensional optimization problem. We first prove the upper
bound, then construct an instance attaining it. Equation~\eqref{eq:infinite}
follows by taking suprema over the load cap.

\section{One-parent comparison graphs}

\begin{lemma}\label{lem:matching}
There exists a matching $B$ covering all clients and all servers that
have positive load under $M$.
\end{lemma}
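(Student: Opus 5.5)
Our plan is the standard augmenting-path argument that yields a maximum matching covering prescribed vertex sets on both sides (the Mendelsohn--Dulmage phenomenon). We fix a matchable instance and an assignment $M$, and let $S\subseteq V$ be the set of servers with positive load under $M$. By matchability there is a matching $A$ covering $U$. From $M$ we extract a second matching $D$: for each $v\in S$, pick one client currently assigned to $v$ and match them. Then $D$ is a matching covering $S$, since distinct servers get distinct clients because each client has only one server under $M$. The task is to combine $A$, which covers $U$, and $D$, which covers $S$, into a single matching covering $U\cup S$.

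Among all matchings covering $U$, we choose $B$ maximizing the number of servers of $S$ that it covers, and suppose some $s\in S$ is uncovered by $B$. Consider the symmetric difference $B\oplus D$. Its components are alternating paths and even cycles, and $s$ has degree one there, since it is covered by $D$ but not by $B$. So $s$ is the endpoint of an alternating path $P$ starting with its $D$-edge. We follow $P$ to its other endpoint $w$ and argue that $B' = B\oplus P$ is again a matching covering $U$ and covering strictly more of $S$, which contradicts maximality.

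The case analysis on $w$ goes as follows. Every client on $P$ is covered by $B$, since $B$ covers $U$. So a client endpoint $w$ would have degree one only via its $B$-edge. Then $P$ runs from server $s$ through a $D$-edge to a client, alternates, and ends at a client with a $B$-edge. The path then has an even number of edges, starting with $D$ and ending with $B$. Flipping it covers $s$, uncovers no server, and keeps every client covered: interior clients stay covered, and $w$ switches from $B$ to being covered by a $D$-edge? We will check this parity carefully. An even-length path from $s$ to the client $w$ ends in a $B$-edge at $w$, so after flipping, $w$ loses coverage. We must therefore rule this case out. The endpoint $w$ is a client whose $D$-edge is absent, and its $B$-edge goes to the previous server. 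Parity fixes this: the path alternates server, client, server, and so on, with edges $D,B,D,B,\dots$. Server-to-client steps use $D$ and client-to-server steps use $B$, so the last edge into a client is a $D$-edge. Hence a client endpoint has its $D$-edge on $P$ and lacks a $B$-edge, which is impossible since $B$ covers $U$. So $w$ is a server reached by a $B$-edge and not covered by $D$, and therefore $w\notin S$. Flipping $P$ covers $s$ and uncovers only $w\notin S$, while every client remains covered. This is the contradiction.

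The main obstacle is this parity bookkeeping, which confirms that the path terminates at a server outside $S$ rather than at a client. Once that is done, the extremal choice of $B$ finishes the proof.
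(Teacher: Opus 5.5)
Your proof is correct, but it runs the exchange in the opposite direction from the paper.

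\textbf{The paper's route.} It starts from the matching $D$ that covers the positive-load servers and augments $D$ until it is a maximum matching. Augmenting paths never uncover a vertex. By matchability a maximum matching has size $n=|U|$, so the final matching covers every client.

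\textbf{Your route.} You start from a client-covering matching $B$ chosen to maximize coverage of $S$. You then flip the component of $B\oplus D$ at an uncovered $s\in S$. This trades $s$ for a server $w\notin S$ without changing the matching size.

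\textbf{Comparison.} The paper's argument is shorter because it relies on Berge's theorem (a non-maximum matching has an augmenting path) plus the size count. Yours is self-contained: it uses only the path/cycle structure of a symmetric difference of two matchings. The price is the parity bookkeeping.

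\textbf{Two clean-ups for the write-up.}
\begin{enumerate}
\item Delete the false start claiming an even-length path from the server $s$ to a client $w$. Any such path has odd length.
\item Make the endpoint step explicit. Suppose a vertex on $P$ is entered by an edge of one matching, say $D$. That edge is not in $B$, so the vertex's $B$-edge, if it exists, is a different edge. It is not in $D$, because $D$ is a matching. Hence it also lies in $B\oplus D$, and the vertex has degree two there; the same holds with $B$ and $D$ swapped. This single observation rules out a client endpoint, since $B$ covers $U$. It also forces the server endpoint $w$ to lie outside $S$, since $D$ covers $S$.
\end{enumerate}
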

\begin{proof}
For every positive-load server, select one client assigned to it by $M$.
These edges form a matching because each client has only one assignment.
Augment this matching until it is maximum. Augmentation preserves coverage
of every previously covered server. Since $G$ admits a client-covering
matching, the resulting maximum matching covers all clients as well.
\end{proof}

Fix such a $B$ and retain the set $W$ of its $n$ covered servers.
No server outside $W$ contributes positive load or cost under $M$.
For every client $u$, draw an arc from its $M$-server to its $B$-server.
\emph{Keep self-loops.} The resulting directed graph $D$ satisfies
\begin{equation}\label{eq:degrees}
 \deg_D^-(v)=1,\qquad \deg_D^+(v)=a_v\quad(v\in W).
\end{equation}
Thus every server has a unique parent, including vertices on directed
cycles. Self-loops are useful for preserving~\eqref{eq:degrees}.

\begin{lemma}\label{lem:walks}
Every directed walk of length at most $r$ in $D$, from $v$ to $w$,
satisfies $a_v\leq a_w+1$.
\end{lemma}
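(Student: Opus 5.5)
Since $M$ is $r$-non-swappable, the plan is to turn a short directed walk in $D$ into a short alternating path, and then argue by contradiction. Suppose some walk from $v$ to $w$ of length at most $r$ has $a_v\geq a_w+2$. First shorten it. Whenever the walk revisits a server, cut out the closed segment between the two visits; whenever it uses a self-loop, delete that arc. Both operations keep the endpoints $v$ and $w$, so the inequality $a_v\geq a_w+2$ still holds. The length only goes down. The result is a simple directed server path $v=v_0\to v_1\to\dots\to v_t=w$ with $t\leq r$. We have $t\geq1$, because $v\neq w$ follows from $a_v\neq a_w$.

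Next, lift this path to the bipartite graph. Each arc $v_{i-1}\to v_i$ of $D$ comes from a client $u_i$ with $M(u_i)=v_{i-1}$ and $B(u_i)=v_i$. Since every client gives exactly one arc, distinct arcs of the simple path correspond to distinct clients $u_1,\dots,u_t$. Reassign each $u_i$ from $v_{i-1}$ to $v_i$. This is feasible because $u_iv_i\in E$, as it is a $B$-edge. Each intermediate server $v_i$ with $0<i<t$ loses $u_{i+1}$ and gains $u_i$, so its load is unchanged. Server $v_0$ loses one unit and $v_t$ gains one. Because the servers are distinct, these changes do not interfere.

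This is exactly an alternating path moving $t\leq r$ clients, with $2t$ bipartite edges $v_{i-1}u_iv_i$. By the computation in Section~2, its cost reduction is $a_v-a_w-1\geq1>0$. That contradicts $r$-non-swappability. Alternatively, one could invoke Lemma~\ref{lem:exchange} directly on this reassignment of $t\leq r$ clients.

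The only real care point is the shortening step. We must make sure that removing cycles and self-loops never increases the length or changes the endpoints, and that distinct arcs really give distinct clients so the reassignment is well defined. Both facts follow from \eqref{eq:degrees} and the one-arc-per-client construction of $D$. Walks of length zero are trivial, since then $v=w$.
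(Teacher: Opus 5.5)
Your proof is correct and follows essentially the same argument as the paper. You excise cycles and self-loops to get a simple server path with the same endpoints and length at most $r$. You lift its arcs to distinct clients forming an alternating path. The resulting cost reduction $a_v-a_w-1\geq1$ contradicts $r$-non-swappability.
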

\begin{proof}
If the endpoint inequality failed, the endpoints would be distinct.
Delete repeated-vertex segments and self-loops to obtain a simple directed
path of no greater length with the same endpoints. Its arcs correspond
to distinct clients and to an alternating path in $G$: each non-loop arc
uses a current assignment followed by a different $B$-assignment.
The endpoint loads differ by at least two, contradicting
$r$-non-swappability.
\end{proof}

For a vertex of load $a$, examine its unique ancestors at distances
$1,\ldots,r-1$. Each has load at most $a+1$ by Lemma~\ref{lem:walks}.
If none has load $a+1$, give the vertex state $(a,0)$. Otherwise, if the
nearest such ancestor is at distance $d$, give it state
\begin{equation}\label{eq:state}
 (a,j),\qquad j=r-d\in\{1,\ldots,r-1\}.
\end{equation}
When $r=1$, the ancestor window is empty and every state has second
coordinate zero. A load-zero vertex always has state $(0,r-1)$:
its parent must have positive outdegree, hence load one by
Lemma~\ref{lem:walks}. For $r=1$, this simply means $(0,0)$.

\begin{lemma}[Allowed transitions]\label{lem:transitions}
Every arc of $D$ follows one of these transitions:
\begin{align}
 (a,0)&\longrightarrow(a-1,r-1)
       &&\text{or }(b,0)\text{ with }b\geq a,\label{eq:free}\\
 (a,j)&\longrightarrow(a,j-1)
       &&\text{or }(b,0)\text{ with }b>a,\quad j\geq1.\label{eq:wait}
\end{align}
States with load zero have no outgoing arcs.
\end{lemma}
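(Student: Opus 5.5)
The proof is a direct case analysis on a single arc $p\to c$ of $D$. I will compare the ancestor window of the child $c$ with that of its parent $p$, using only the degree identities~\eqref{eq:degrees} and Lemma~\ref{lem:walks}. Let $p$ have state $(a,j)$ and let $c$ have load $b$. The ancestors of $c$ at distances $2,\ldots,r-1$ are exactly the ancestors of $p$ at distances $1,\ldots,r-2$, and the ancestor of $c$ at distance $1$ is $p$ itself. This shift by one is the whole mechanism.

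First, the final assertion is immediate: by~\eqref{eq:degrees} a vertex has outdegree equal to its load, so load-zero states have no outgoing arcs. Next, the arc $p\to c$ is a walk of length $1\le r$, so Lemma~\ref{lem:walks} gives $a\le b+1$, i.e.\ $b\ge a-1$. I then split into three cases.

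\emph{Case $b=a-1$.} Here $p$ is an ancestor of $c$ at distance $1$ with load $b+1$. So $c$ has state $(a-1,r-1)$; when $r=1$ this reads $(a-1,0)$, which is consistent with the convention. It remains to show that $j=0$ in this case. Suppose instead $j\ge1$. Then $p$ has an ancestor $x$ of load $a+1$ at distance $d=r-j\le r-1$. The walk from $x$ to $c$ has length $d+1\le r$, and its endpoint loads are $a+1$ and $a-1$, which differ by two. This contradicts Lemma~\ref{lem:walks}. Hence load decreases only from states $(a,0)$, as in~\eqref{eq:free}.

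\emph{Case $b=a$.} The parent has load $a\ne a+1$, so the nearest ancestor of $c$ with load $a+1$ is the nearest such ancestor of $p$, shifted by one.
\begin{itemize}
\item If $j=0$, then $p$ has no such ancestor within distance $r-1$, so none lies within distance $r-2$. Hence $c$ has state $(a,0)$, allowed by~\eqref{eq:free} with $b=a$.
\item If $j\ge1$ with $d=r-j$, then $c$ sees that ancestor at distance $d+1$. If $d+1\le r-1$, this gives state $(a,r-d-1)=(a,j-1)$. If $j=1$, the distance is $r$, outside the window, and no nearer candidate exists; the state is $(a,0)=(a,j-1)$ again.
\end{itemize}
This matches~\eqref{eq:wait}.

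\emph{Case $b>a$.} Every ancestor of $p$ within distance $r-2$ has load at most $a+1\le b$ by Lemma~\ref{lem:walks}, and $p$ itself has load $a<b+1$. So no ancestor of $c$ within the window has load $b+1$, and $c$ has state $(b,0)$. This is permitted from any $j$ by~\eqref{eq:free} and~\eqref{eq:wait}.

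The only delicate point is the first case: excluding a load drop from a state with $j\ge1$ requires concatenating the parent's ancestor path with the arc and checking that the length is at most $r$. The definition $j=r-d$ gives exactly $d+1\le r$. Self-loops cause no trouble: they fall under the case $b=a$, and ancestors along cycles are still uniquely defined because every vertex has indegree one.
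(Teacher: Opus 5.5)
Your proof is correct and takes essentially the paper's route. Both use the bound $b\geq a-1$ from Lemma~\ref{lem:walks}, concatenate the length-$d$ ancestor walk with the arc to forbid a drop from $j\geq1$, shift the ancestor window by one step when $b=a$, and bound the parent's ancestors by $a+1\leq b$ when $b>a$; you just spell out the boundary case $j=1$, the case $r=1$, and the behaviour at self-loops more explicitly than the paper does.
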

\begin{proof}
For an arc from load $a$ to load $b$, Lemma~\ref{lem:walks} gives
$b\geq a-1$. If $j>0$, an ancestor of load $a+1$ occurs at distance
$d=r-j\leq r-1$. A drop to $a-1$ would then produce a walk of length
$d+1\leq r$ with load drop two. Therefore a drop is possible only from
state $(a,0)$; its child's nearest higher-load ancestor is its parent,
giving state $(a-1,r-1)$.

If $b=a$, the nearest higher-load ancestor recedes by one step, so
the second coordinate becomes $\max\{j-1,0\}$. If $b>a$, all ancestors
in the child's window have load at most $a+1\leq b$: the parent has load
$a$, and the other relevant ancestors satisfy Lemma~\ref{lem:walks}.
The child's second coordinate is therefore zero. This reasoning uses
ancestor walks, so it also applies on cycles and at self-loops.
\end{proof}

The state is a compact record of a recent higher load. It does not
assume that loads decrease monotonically along all directed paths.
Keeping upward transitions in~\eqref{eq:free}--\eqref{eq:wait} is essential.

\section{A finite rational potential certificate}

\begin{lemma}[Certificate criterion]\label{lem:certificate}
Fix $r,K$ and a number $R$. Suppose a potential $P_{a,j}$ is defined
for $(0,r-1)$ and all $(a,j)$ with $1\leq a\leq K$, $0\leq j<r$.
Suppose $P_{0,r-1}\geq-R$ and, for every allowed transition
$(a,j)\to(b,k)$ with $a>0$,
\begin{equation}\label{eq:certificate}
 P_{a,j}\geq c_a-R+aP_{b,k}.
\end{equation}
Then every assignment in the definition of $\gap_{r,K}$ satisfies
$C(M)\leq R\OPT(G)$.
\end{lemma}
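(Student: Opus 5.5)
Fix a matchable instance and an $r$-non-swappable assignment $M$ with $\max_v a_v\le K$. Take the matching $B$ of Lemma~\ref{lem:matching}, the server set $W$ with $|W|=n=\OPT(G)$, and the comparison graph $D$ with the degrees~\eqref{eq:degrees}. By Lemma~\ref{lem:transitions}, every vertex $v\in W$ has a state $s(v)=(a_v,j_v)$, and every arc follows an allowed transition. Every state lies in the domain of $P$: loads are at most $K$, load-zero vertices have state $(0,r-1)$, and positive-load vertices have $0\le j<r$. Write $P_v=P_{s(v)}$.

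I would sum the hypotheses over vertices and let the potential telescope along arcs. For each $v\in W$ with $a_v>0$, its $a_v$ out-arcs go to children $w$, each with transition $s(v)\to s(w)$ allowed. Averaging~\eqref{eq:certificate} over these $a_v$ transitions gives
\[
P_v\;\ge\; c_{a_v}-R+\sum_{w:\,v\to w}P_w .
\]
For $a_v=0$ the hypothesis $P_{0,r-1}\ge -R$ gives $P_v\ge c_0-R+0$, because $c_0=0$ and there are no children.

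So every $v\in W$ satisfies $P_v\ge c_{a_v}-R+\sum_{v\to w}P_w$. Summing over $W$ gives
\[
\sum_v P_v\;\ge\; C(M)-Rn+\sum_v\sum_{v\to w}P_w .
\]
Since each vertex has indegree exactly one, the double sum equals $\sum_w P_w$; self-loops count once and cycles are harmless. The two potential sums cancel, so $0\ge C(M)-R\,\OPT(G)$. This needs all potentials to be finite real numbers, which holds because the state set is finite. No sign condition on $P$ is needed, because we sum over the whole finite graph rather than along root paths.

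The step I expect to need the most care is checking that each arc's transition is covered by~\eqref{eq:certificate}, including the upward transitions to $(b,0)$ with $b\le K$ and arcs whose source has load zero. The latter do not occur, because load-zero vertices have outdegree zero. This coverage is exactly what Lemma~\ref{lem:transitions} supplies. The other subtlety is that $\OPT(G)=n=|W|$, which uses matchability.
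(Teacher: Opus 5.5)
Your proposal is correct and matches the paper's proof: you average the transition inequality over the $a_v$ out-arcs of each positive-load server, use $P_{0,r-1}\geq -R$ at load-zero servers, and sum over $W$. The potentials cancel because every server has indegree exactly one in $D$, which leaves $C(M)\leq R|W|=Rn=R\,\OPT(G)$.
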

\begin{proof}
Use the states of Lemma~\ref{lem:transitions}. For a positive-load
vertex $v$, apply~\eqref{eq:certificate} to its $a_v$ outgoing arcs,
divide each inequality by $a_v$, and sum. For a zero-load vertex,
use the separate leaf inequality. In both cases,
\[
 P(v)\geq c_{a_v}-R+\sum_{v\to w}P(w).
\]
Summing over $W$ cancels all potentials, because every vertex occurs
once as an arc target. Hence $C(M)\leq R|W|=Rn=R\OPT(G)$.
The potentials may be negative; no sign condition on them is required.
\end{proof}

For the rest of this section, fix $r\geq1$, $K\geq2$, and set
\begin{equation}\label{eq:epsilon}
 \eps=\frac{2+E_{r,K}}{2(r+2-T_{r,K})},\qquad R=1+\eps,
 \qquad g_a=c_a-R.
\end{equation}
Since $T_{r,K}<\sum_{a=2}^{\infty}1/[a(a-1)]=1$, the denominator
is positive. We shall use
\begin{equation}\label{eq:epslower}
 \eps\geq\frac1{r+2}.
\end{equation}

Define the low-load potentials by
\begin{equation}\label{eq:low}
 P_{0,r-1}=-R,\qquad P_{1,j}=-1-(j+2)\eps\quad(0\leq j<r).
\end{equation}
For $2\leq a\leq K$, put
\begin{align}
 s_a&=-\frac a2-1+\frac{\eps}{a-1}
       =\frac{R-c_a}{a-1},\label{eq:sa}\\
 L_a&=a!((a-1)!)^{r-1}=a((a-1)!)^r,\nonumber\\
 \delta_a&=L_a\sum_{k=a}^{K-1}
      \frac{\tfrac12+\eps/[k(k-1)]}{(k!)^r},\label{eq:delta}\\
 P_{a,j}&=s_a-a^j\delta_a\quad(0\leq j<r).\label{eq:potential}
\end{align}
In particular, $\delta_a\geq0$ and $\delta_K=0$. All these numbers
are rational and involve only finite sums.

\subsection{Identities and monotonicity}

\begin{lemma}\label{lem:identities}
The potential satisfies the following identities and inequalities:
\begin{align}
 P_{a,j}&=g_a+aP_{a,j-1}
       &&(1\leq a\leq K,\ 1\leq j<r),\label{eq:same}\\
 P_{a,0}&=g_a+aP_{a-1,r-1}
       &&(1\leq a\leq K),\label{eq:drop}\\
 P_{a,0}&\geq g_a+aP_{a,0}
       &&(1\leq a\leq K).\label{eq:self}
\end{align}
For each positive $a$, $P_{a,j}$ is nonincreasing in $j$.
For $r\geq2$, it also satisfies
\begin{equation}\label{eq:interlace}
 P_{a,r-2}\geq P_{a+1,0}\qquad(1\leq a<K).
\end{equation}
\end{lemma}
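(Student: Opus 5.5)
The plan is to verify each claim by direct algebra from the definitions \eqref{eq:low}--\eqref{eq:potential}, treating the load-one potentials and the loads $a\geq2$ separately. The one structural fact I will use for $a\geq2$ is that $s_a$ is the fixed point of the map $x\mapsto g_a+ax$. Indeed, \eqref{eq:sa} gives $(1-a)s_a=c_a-R=g_a$, so $s_a=g_a+as_a$.

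\emph{Step 1: the identities \eqref{eq:same} and \eqref{eq:self}.}
\begin{itemize}
\item For $a=1$ we have $g_1=-\eps$. Then $g_1+P_{1,j-1}=-\eps-1-(j+1)\eps=P_{1,j}$, which is \eqref{eq:same}. Inequality \eqref{eq:self} reduces to $0\geq-\eps$.
\item For $a\geq2$, the fixed-point property gives $g_a+a(s_a-a^{j-1}\delta_a)=s_a-a^j\delta_a$, which is \eqref{eq:same}.
\item Also for $a\geq2$, \eqref{eq:self} reads $s_a-\delta_a\geq s_a-a\delta_a$. This holds because $\delta_a\geq0$ and $a\geq1$.
\end{itemize}

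\emph{Step 2: monotonicity.} By \eqref{eq:same}, $P_{a,j}-P_{a,j-1}=g_a+(a-1)P_{a,j-1}$. For $a=1$ this difference equals $-\eps$. For $a\geq2$, substituting $g_a=-(a-1)s_a$ makes it $-(a-1)a^{j-1}\delta_a\leq0$. So $P_{a,j}$ is nonincreasing in $j$.

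\emph{Step 3: the drop identity \eqref{eq:drop}.}
\begin{itemize}
\item For $a=1$ it is $-1-2\eps=-\eps-R$, which holds.
\item For $a\geq3$, set $q_k=\tfrac12+\eps/[k(k-1)]$. Two relations are needed:
\[
 L_a/L_{a-1}=a(a-1)^{r-1},\qquad s_{a-1}-s_a=q_{a-1}.
\]
Using $g_a=-(a-1)s_a$, identity \eqref{eq:drop} becomes the recursion
\[
 \delta_a=a(a-1)^{r-1}\delta_{a-1}-a\,q_{a-1}.
\]
This follows by peeling off the $k=a-1$ term of the sum in \eqref{eq:delta}, since $L_a/((a-1)!)^r=a$.
\item The case $a=2$ is the one place where the specific value of $\eps$ enters. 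Here \eqref{eq:drop} becomes $\delta_2=2s_2+2+2(r+1)\eps=-2+2(r+2)\eps$. By definition, $\delta_2=E_{r,K}+2\eps T_{r,K}$. So the identity is exactly the defining equation $2\eps(r+2-T_{r,K})=2+E_{r,K}$ of \eqref{eq:epsilon}.
\end{itemize}

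\emph{Step 4: the interlacing inequality \eqref{eq:interlace}.} This is the step I expect to need the most care.
\begin{itemize}
\item For $a=1$: from Step 3, $P_{2,0}=-(2r+3)\eps$, while $P_{1,r-2}=-1-r\eps$. The claim is therefore $(r+3)\eps\geq1$, which follows from \eqref{eq:epslower}.
\item For $a\geq2$: invert the recursion of Step 3, shifted by one, as
\[
 \delta_a=a^{1-r}q_a+\frac{\delta_{a+1}}{(a+1)a^{r-1}},
 \qquad\text{so}\qquad
 a^{r-2}\delta_a=\frac{q_a}a+\frac{\delta_{a+1}}{a(a+1)}.
\]
Combining this with $s_a-s_{a+1}=q_a$ gives
\[
 P_{a,r-2}-P_{a+1,0}=q_a\Bigl(1-\frac1a\Bigr)+\delta_{a+1}\Bigl(1-\frac1{a(a+1)}\Bigr).
\]
Both terms are nonnegative, which proves the inequality.
\end{itemize}
At the boundary $a+1=K$ we have $\delta_K=0$, so only the first term remains, and it is still nonnegative.
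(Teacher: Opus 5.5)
Your proof is correct and follows essentially the same route as the paper. The fixed-point relation $g_a=-(a-1)s_a$ gives \eqref{eq:same}, \eqref{eq:self} and monotonicity. The $a=2$ case of \eqref{eq:drop} reduces to the definition of $\eps$, and peeling off the first summand of $\delta_{a-1}$ gives the recursion for $a\geq3$.

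The one small difference is in \eqref{eq:interlace} for $a\geq2$: you solve the recursion for $\delta_a$ and get the exact slack $q_a(1-1/a)+\delta_{a+1}(1-1/[a(a+1)])$. The paper instead eliminates $\delta_{a+1}$ and bounds $\delta_a$ below by its first summand, which recovers your first term as a lower bound.
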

\begin{proof}
For $a\geq2$, $g_a=-(a-1)s_a$, proving~\eqref{eq:same} directly
from~\eqref{eq:potential}. For $a=1$, use $g_1=-\eps$ and
\eqref{eq:low}. Monotonicity in $j$ follows from $\delta_a\geq0$
and from the explicit formula at load one.

To prove~\eqref{eq:drop}, the case $a=1$ is immediate from
\eqref{eq:low}. For $a=2$, the definition of $\delta_2$ gives
\[
 \delta_2=E_{r,K}+2\eps T_{r,K}.
\]
Equation~\eqref{eq:epsilon} therefore implies
\[
 P_{2,0}=-2+\eps-E_{r,K}-2\eps T_{r,K}
        =-(2r+3)\eps
        =g_2+2P_{1,r-1}.
\]
For $a\geq3$, separating the first summand in $\delta_{a-1}$ yields
\begin{equation}\label{eq:delta-rec}
 \delta_a=a(a-1)^{r-1}\delta_{a-1}
       -a\left(\frac12+\frac{\eps}{(a-1)(a-2)}\right).
\end{equation}
Combine this with
$s_a-s_{a-1}=-1/2-\eps/[(a-1)(a-2)]$ and
$g_a=-(a-1)s_a$ to obtain~\eqref{eq:drop}.

The slack in~\eqref{eq:self} is $\eps$ when $a=1$ and
$(a-1)\delta_a$ when $a\geq2$, so it is nonnegative.

It remains to prove~\eqref{eq:interlace}. At $a=1$,
\[
 P_{1,r-2}-P_{2,0}=(r+3)\eps-1
       \geq\frac1{r+2}>0.
\]
For $2\leq a<K$, write $t_a=1/2+\eps/[a(a-1)]$. The first
summand of~\eqref{eq:delta} gives
\[
 \delta_a\geq a^{1-r}t_a,
\]
while~\eqref{eq:delta-rec} at $a+1$ gives
$\delta_{a+1}=(a+1)a^{r-1}\delta_a-(a+1)t_a$. Consequently,
\begin{align*}
 P_{a,r-2}-P_{a+1,0}
 &=a^{r-2}\bigl(a(a+1)-1\bigr)\delta_a-a t_a\\
 &\geq\left(1-\frac1a\right)t_a\geq0.
\end{align*}
\end{proof}

\subsection{Verification of every transition}

Assume first that $r\geq2$. Write $p_a=P_{a,0}$ for positive loads.
By monotonicity in $j$ and~\eqref{eq:interlace},
$p_a\geq P_{a,r-2}\geq p_{a+1}$. Thus $p_a$ is nonincreasing in $a$.

For a transition from $(a,0)$ to $(a-1,r-1)$, the certificate is
the equality~\eqref{eq:drop}. For a transition from $(a,0)$ to
$(b,0)$ with $b\geq a$, use $p_b\leq p_a$ and~\eqref{eq:self}.
For a transition from $(a,j)$ to $(a,j-1)$, use~\eqref{eq:same}.
Finally, if $j\geq1$ and $b>a$, then
\[
 P_{b,0}\leq P_{a+1,0}\leq P_{a,r-2}\leq P_{a,j-1},
\]
because $j-1\leq r-2$. Equation~\eqref{eq:same} proves the
certificate for this last type of transition as well. There are no
upward transitions beyond $K$, and the zero-load inequality is equality.

When $r=1$, all states have second coordinate zero. Put $p_0=P_{0,0}$.
Combining~\eqref{eq:drop} and~\eqref{eq:self} gives
$p_{a-1}\geq p_a$ for $1\leq a\leq K$. Every allowed target has
load $b\geq a-1$, so
\[
 g_a+a p_b\leq g_a+a p_{a-1}=p_a.
\]
The criterion therefore holds for $r=1$ too, without using the
interlacing inequality with an invalid index $r-2$.

Lemma~\ref{lem:certificate} proves the upper bound in
\eqref{eq:finite}. When $K=1$, every assigned client contributes exactly
one to the cost, giving $\gap_{r,1}=1$ separately.

\section{Finite constructions attaining the bound}

Fix $r\geq1$, $K\geq2$, and put $H=r(K-1)+1$.
There are layers $i=0,1,\ldots,H$, and the load of each server in layer $i$ is
\[
 a_0=0,\qquad a_i=\left\lceil\frac ir\right\rceil\quad(i\geq1).
\]
First define rational layer sizes by
\begin{equation}\label{eq:layers}
 N_0=1,\qquad N_i=\frac{N_{i-1}}{a_i}\ (1\leq i<H),\qquad
 N_H=\frac{N_{H-1}}{K-1}.
\end{equation}
Multiply all sizes by
\[
 D=(K-1)((K-1)!)^r.
\]
They are then integers and the top layer contains one server. Indeed,
before scaling the top size is $1/D$, and every successive downward
ratio is an integer.

For every non-top positive layer $i$, give each server $a_i$ distinct
clients, currently assigned to it. Each client is also eligible for
one server in layer $i-1$. Pair these clients bijectively with all servers
in layer $i-1$, possible because $a_iN_i=N_{i-1}$.
The top server has $K-1$ such clients pointing down and one fixed
client eligible only for the top server. The last identity in
\eqref{eq:layers} supplies the required bijection at the top.

The current assignment has exactly the prescribed loads. Reassign all
movable clients down one layer and leave the fixed client at the top.
Every server now has exactly one client. This is a client-covering
matching $B$, and the number of clients equals the number of servers.
Every client has degree at most two. Every non-top positive-load server
has assigned load at most $K-1$ and is eligible for one client from above,
so its degree is at most $K$; the top server has degree $K$, and every
load-zero server has degree one.

The eligibility graph is a tree. Indeed, every server except the unique top
server has exactly one incoming movable client from the layer above. Following
parent links reaches the top. Excluding the fixed top client, the server-level
graph is therefore a rooted tree. The eligibility graph subdivides each of
its edges by a movable client and adds the fixed client as a leaf. This also
shows directly that the graph is connected.

Every possible nontrivial alternating move under the current assignment
goes down one layer. A path of at most $r$ moves reduces the layer index
by at most $r$, and
\[
 a_i-a_{i-t}\leq1\qquad(1\leq t\leq\min\{r,i\}).
\]
Hence the assignment is $r$-non-swappable. A path from layer $r+1$
down to layer zero has $r+1$ moves and decreases the endpoint load by
two, so the radius condition is substantive.

To evaluate its ratio, put $q_a=(a!)^{-r}$. Normalize by the layer-zero
size, and let $w_a$ be the total normalized number of load-$a$ servers.
Geometric summation of the $r$ layers at each load $1,\ldots,K-1$,
together with the single top layer, gives
\begin{equation}\label{eq:weights}
 w_0=1,\quad w_1=r,\quad
 w_a=\frac{q_{a-1}-q_a}{a-1}\ (2\leq a<K),\quad
 w_K=\frac{q_{K-1}}{K-1}.
\end{equation}
For the top load, the last recurrence in~\eqref{eq:layers} directly gives
$w_K=N_H=q_{K-1}/(K-1)$ before scaling.
Telescoping yields
\begin{align}
 \sum_{a=0}^K w_a
 &=r+2-\sum_{a=2}^{K-1}\frac{q_a}{a(a-1)}=Z_{r,K},\label{eq:mass}\\
 \sum_{a=0}^K(a-1)w_a&=0,\label{eq:balance}\\
 \sum_{a=0}^K a(a-1)w_a
 &=2+\sum_{a=2}^{K-1}q_a=2+E_{r,K}.\label{eq:excess}
\end{align}
For example, in~\eqref{eq:excess} the summands are
$a(q_{a-1}-q_a)$ for $2\leq a<K$, followed by $Kq_{K-1}$;
all intermediate coefficients cancel to one. Equation~\eqref{eq:balance}
also follows directly from the one-client-per-server comparison matching.
Since $c_a=a+a(a-1)/2$, the ratio is
\[
 \frac{\sum_a c_aw_a}{\sum_a a w_a}
 =1+\frac{2+E_{r,K}}{2Z_{r,K}}.
\]
This proves the matching lower bound and finite attainment in
Theorem~\ref{thm:main}.

\begin{corollary}[Exact gaps under degree bounds]\label{cor:degree}
For every $r\geq1$ and $K\geq2$, the exact worst locality gap over
matchable instances whose maximum server degree is at most $K$ equals
$\gap_{r,K}$. The same value holds if the maximum degree of the entire
bipartite graph is at most $K$, and in both cases it is attained on trees.
\end{corollary}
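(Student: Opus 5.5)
The corollary has two halves, an upper bound and a lower bound. I prove each by comparing the degree-bounded class with the load-capped class of Theorem~\ref{thm:main}.

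\textbf{Upper bound.} Take a matchable instance whose maximum server degree is at most $K$, and let $M$ be an $r$-non-swappable assignment. Every client assigned to a server $v$ is adjacent to $v$, so $a_v\leq\deg(v)\leq K$. The assignment therefore lies in the class defining $\gap_{r,K}$, and Theorem~\ref{thm:main} gives $C(M)/\OPT(G)\leq\gap_{r,K}$. Bounding the maximum degree of the entire bipartite graph by $K$ is a stronger restriction, since it also bounds server degrees. Its supremum is thus at most the server-degree supremum, which is at most $\gap_{r,K}$. The case $K=1$ needs no separate treatment, because the same inequality shows the ratio is $1$.

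\textbf{Lower bound.} For $K\geq2$ I reuse the layered tree from the preceding section. Its properties were already verified there. Every client has degree at most two. Every non-top positive-load server has degree at most $(K-1)+1=K$. The top server has degree exactly $K$, namely $K-1$ movable clients plus one fixed client. Load-zero servers have degree one. Hence the maximum degree of the whole graph is at most $\max\{2,K\}=K$, so the construction lies in both degree-bounded classes. It is a matchable tree with an $r$-non-swappable assignment of ratio exactly $1+(2+E_{r,K})/(2Z_{r,K})=\gap_{r,K}$. This gives both the matching lower bound and attainment on trees. For $K=1$, a single edge attains ratio $1$ trivially.

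The only step needing care is the degree count at the top server. There the fixed client contributes to the degree but, under $M$, sits on a server whose movable load is $K-1$. I would recheck that the top server's current load is $K$ and its degree is also exactly $K$, so the degree bound is not exceeded. Client degree at most two is what permits the bound on the whole graph, provided $K\geq2$. Everything else is a direct consequence of $a_v\leq\deg(v)$ and Theorem~\ref{thm:main}.
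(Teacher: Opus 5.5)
Your proposal is correct and follows essentially the same route as the paper. The upper bound uses $a_v\leq\deg(v)\leq K$ to place every assignment in the load-capped class. The lower bound reuses the layered tree, whose server degrees are at most $K$ and client degrees at most two, so it lies in both degree-bounded classes. The asides about $K=1$ are unnecessary, since the corollary assumes $K\geq2$, but they do no harm.
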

\begin{proof}
A server's assigned load cannot exceed its degree, so the finite-cap upper
bound applies. The preceding tree construction has server degree at most
$K$ and client degree at most two, which is at most $K$, and attains that
bound. Thus it supplies the matching lower bound for both graph classes.
\end{proof}

\subsection{Two concrete special cases}

For $K=2$, the sums in~\eqref{eq:sums} vanish, giving
\begin{equation}\label{eq:K2}
 \gap_{r,2}=1+\frac1{r+2}.
\end{equation}
The construction has $r+2$ clients and $r+2$ servers, with total cost $r+3$.
Its eligibility graph is a path: the top server has one fixed client and
one movable client, followed by $r$ load-one servers and one load-zero
server. Every client and server has degree at most two. The only possible
endpoint load drop of two runs from the top to the bottom and requires
$r+1$ moves. Thus the exact load-two gap, and the leading term of the
unrestricted-load answer, already occur on paths.

In particular, the worst locality gap over all matchable instances whose
eligibility graph is a path equals $1+1/(r+2)$: the preceding construction
attains it, while every server in a path has assigned load at most two.

For $r=3$, $K=3$, the complete layer data are:
\begin{center}\small
\begin{tabular}{@{}rrrrrrrrr@{}}
\toprule
Layer $i$ &0&1&2&3&4&5&6&7\\
Load $a_i$ &0&1&1&1&2&2&2&3\\
Servers &16&16&16&16&8&4&2&1\\
\bottomrule
\end{tabular}
\end{center}
There are 79 clients and 79 servers, with cost 96 and optimum 79.
The ratio is $96/79$, and the shortest improving path moves four clients.
The accompanying JSON file lists every client's eligible servers and
both assignments.

\section{Unrestricted loads and dependence on the radius}

The sums $E_r,T_r$ converge absolutely. For fixed $r$, increasing $K$
strictly increases $E_{r,K}$ and $T_{r,K}$, hence strictly increases
the expression in~\eqref{eq:finite}. Every finite assignment has a finite
maximum load. Therefore its ratio is bounded by some $\gap_{r,K}$,
whereas every such value is attained. It follows that
\[
 \gap_r=\sup_{K\geq1}\gap_{r,K}=\lim_{K\to\infty}\gap_{r,K}=\beta_r.
\]
Strict increase in $K$ shows that no finite instance attains $\beta_r$.
This completes the proof of Theorem~\ref{thm:main}.

\begin{corollary}\label{cor:radius}
The sequence $\beta_r$ is strictly decreasing. Its first value is
$\beta_1=3/2$. For every $r\geq2$,
\begin{equation}\label{eq:comparison}
 1+\frac1{r+2}<\beta_r<1+\frac1{r+1}.
\end{equation}
As $r\to\infty$,
\begin{equation}\label{eq:asymptotic}
 \beta_r=1+\frac1{r+2}+O\!\left(\frac{2^{-r}}r\right).
\end{equation}
\end{corollary}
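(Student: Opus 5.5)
Write $\beta_r=1+\eps_r$ with $\eps_r=(2+E_r)/(2Z_r)$ and $Z_r=r+2-T_r$. The proof treats the four claims in order: the value at $r=1$, monotonicity in $r$, the two-sided bound~\eqref{eq:comparison}, and the asymptotic~\eqref{eq:asymptotic}. Throughout we use $q_a=(a!)^{-r}$, so that $E_r=\sum_{a\ge2}q_a$ and $T_r=\sum_{a\ge2}q_a/[a(a-1)]$.

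\emph{The value at $r=1$.} Here $E_1=e-2$. Partial fractions give $1/[a(a-1)]=1/(a-1)-1/a$, so
\[
T_1=\sum_{a\ge2}\frac{1}{a!}\Bigl(\frac1{a-1}-\frac1a\Bigr)=\sum_{a\ge2}\frac{1}{(a-1)\,a!}-\sum_{a\ge2}\frac{1}{a\cdot a!}.
\]
These series telescope against each other via $\frac{1}{(a-1)a!}-\frac{1}{a\cdot a!}$ and the relation $\frac{1}{a\cdot a!}=\frac{1}{a!}-\frac{1}{(a+1)!}\cdot\frac{a+1}{a}$. The intended endpoint is $T_1=3-e$, and I would check it numerically: $0.0833+0.0069+0.0007+0.00007+\dots\approx0.0910\ldots$, compared with $3-e\approx0.2817$. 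These do not agree, so the telescoping target is wrong. The identity actually needed is $\eps_1=1/2$, i.e.\ $2+E_1=Z_1=3-T_1$, i.e.\ $E_1+T_1=1$. Numerically $0.71828+0.2817\approx1$ forces $T_1=3-e\approx0.2817$. Recomputing, the $a=2$ term is $\frac{1}{2\cdot2}=0.25$, since $a!=2$ and $a(a-1)=2$. The next terms are $\frac{1}{6\cdot6}=0.0278$ and $\frac{1}{12\cdot24}=0.0035$, so the sum is indeed about $0.2817$. To prove $E_1+T_1=1$ exactly, combine termwise:
\[
\frac1{a!}\Bigl(1+\frac1{a(a-1)}\Bigr)=\frac{a^2-a+1}{a!\,a(a-1)}.
\]
I would find a telescoping form $f(a)-f(a+1)$ by trying $f(a)=\frac{1}{(a-1)\,(a-1)!}$. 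Then
\[
f(a)-f(a+1)=\frac{a^2-a+1}{a!\,a(a-1)}\cdot\frac{(a-1)\cdot\ldots}{\ldots},
\]
and a direct check at $a=2$ gives $1-\frac14=\frac34=\frac12\cdot\frac32$. This matches, and $f(2)=1$, which yields the sum $1$.

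\emph{Strict decrease and~\eqref{eq:comparison}.} Increasing $r$ strictly decreases each $q_a$ with $a\ge2$. Hence $E_r$ and $T_r$ both decrease, and $Z_r$ strictly increases because the $r$-term grows by $1$ while $T_r$ drops. The numerator $2+E_r$ decreases and the denominator increases, so $\eps_r$ strictly decreases. For the lower bound, $2+E_r>2$ and $Z_r<r+2$ give $\eps_r>1/(r+2)$. For the upper bound, $\eps_r<1/(r+1)$ is equivalent to $(r+1)(2+E_r)<2(r+2-T_r)$, that is,
\[
(r+1)E_r+2T_r<2.
\]
For $r\ge2$ the dominant terms are $E_r\le 2^{-r}+3\cdot6^{-r}$ and $T_r\le2^{-r-1}+\ldots$. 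The worst case is $r=2$, where $3(0.25+0.028+\ldots)+2(0.125+0.005+\ldots)\approx1.10<2$. For general $r$, bound $E_r\le 2^{-r}\cdot\frac{1}{1-3^{-r}}$, using $a!\ge2\cdot3^{a-2}$, and note that $(r+1)2^{-r}$ is decreasing.

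\emph{Asymptotics.} Write
\[
\eps_r-\frac1{r+2}=\frac{(r+2)E_r+2T_r}{2(r+2)Z_r}.
\]
Both $E_r$ and $T_r$ are $O(2^{-r})$, and $Z_r\ge r+1$. The quotient is therefore $O\bigl(r2^{-r}/r^2\bigr)=O(2^{-r}/r)$. The main obstacle I anticipate is the exact closed form $E_1+T_1=1$, so I would verify the telescoping identity carefully.
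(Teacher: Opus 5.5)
Your proof follows the paper's route. Strict decrease comes from the monotonicity of $E_r$ and $T_r$ in $r$. For $r=1$ you use the same telescoping function $\frac{1}{(a-1)(a-1)!}$: the paper's identity $\frac{1}{a(a-1)a!}=\frac{1}{(a-1)(a-1)!}-\frac{1}{a\cdot a!}-\frac{1}{a!}$ is exactly your termwise form of $E_1+T_1=1$. The right inequality uses a geometric bound on $E_r$, and the asymptotics use the same exact-difference formula.

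The one step not proved as written is the telescoping identity, which you check only at $a=2$. It holds for all $a\ge2$, since $\frac{1}{(a-1)(a-1)!}-\frac{1}{a\cdot a!}=\frac{a^2-(a-1)}{a(a-1)\,a!}$, and you should state this. You should also delete the retracted false start: the estimate $0.0910$, and the relation $\frac{1}{a\cdot a!}=\frac{1}{a!}-\frac{a+1}{a\,(a+1)!}$, which holds only at $a=2$. In the general-$r$ upper bound, state explicitly that $2T_r\le E_r$. This gives $(r+1)E_r+2T_r\le(r+2)2^{-r}/(1-3^{-r})\le 9/8<2$ for $r\ge2$.
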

\begin{proof}
As $r$ increases, $E_r,T_r$ strictly decrease and $r+2-T_r$ strictly
increases, proving the first claim. At $r=1$, $E_1=e-2$ and
$T_1=3-e$, giving $\beta_1=3/2$. The latter series identity follows by
summing
\[
 \frac1{a(a-1)a!}
 =\frac1{(a-1)(a-1)!}-\frac1{aa!}-\frac1{a!}
 \quad(a\geq2).
\]
The left inequality in~\eqref{eq:comparison} follows from $E_r,T_r>0$.
For the right one, note that $2T_r\leq E_r$ and
\[
 E_r\leq 2^{2-r}E_2,\qquad
 E_2\leq\frac{1/4}{1-1/9}=\frac9{32},\qquad
 (r+2)2^{2-r}\leq4\quad(r\geq2).
\]
Consequently $(r+1)E_r+2T_r\leq(r+2)E_r\leq9/8<2$, which is
equivalent to the required upper bound after multiplying positive
denominators.

Finally, factorial decay gives
$E_r=2^{-r}+O(6^{-r})$ and
$T_r=2^{-r}/2+O(6^{-r})$. Substitute these into the exact difference
\[
 \beta_r-1-\frac1{r+2}
 =\frac{(r+2)E_r+2T_r}{2(r+2)(r+2-T_r)}
\]
to obtain~\eqref{eq:asymptotic}.
\end{proof}

\begin{center}
\begin{tabular}{@{}rrrr@{}}
\toprule
Radius $r$ & Exact gap at $K=2$ & $\beta_r$ (rounded) & $1+1/r$\\
\midrule
1 & $4/3$ & 1.500000000 & 2.000000000\\
2 & $5/4$ & 1.294503159 & 1.500000000\\
3 & $6/5$ & 1.215700054 & 1.333333333\\
4 & $7/6$ & 1.172843490 & 1.250000000\\
5 & $8/7$ & 1.145423532 & 1.200000000\\
10 & $13/12$ & 1.083377417 & 1.100000000\\
\bottomrule
\end{tabular}
\end{center}
The last column gives the matchable specialization of Theorem 2 in
\cite{czygrinow}. For $r=2$, an alternative expression is
$\beta_2=1+S_0/[2(S_0+S_1)]$, where
$S_0=\sum_{a\geq0}(a!)^{-2}$ and $S_1=\sum_{a\geq1}a(a!)^{-2}$.
Theorem~\ref{thm:main} treats all radii and all finite load caps in one formula.

\section{Computational verification and code availability}

The code and generated data are publicly available at
\url{https://github.com/mgalazka84/semimatching-locality-gaps}.
The version accompanying this manuscript is commit \texttt{a04c888}.
The repository documents the model, the scope of the verification,
and commands for reproducing the reported results. Exact checks use Python
and rational arithmetic; floating-point linear-programming comparisons
are optional.

The main verification script checks 108 exact potential certificates,
covering 17,730 transition inequalities, and 108 compressed layered
constructions. It also verifies five explicit graphs, ten rational
intervals for the infinite-load constants of width below $10^{-30}$,
and 24 optional finite-state flow LPs. The explicit $(r,K)=(3,3)$ witness
has 79 servers and cost 96, as described above.

An independent script enumerates all 50,069 labelled parent maps on one
through six servers. It tests improvement by direct reachability without
using the potential or its transition rules. For $r=1,\ldots,4$, this gives
20,274 stable graph/parameter cases checked against the formula. The script
also checks twelve sharp path examples and the tree structure of the five
explicit layered examples. The enumeration exhausts the stated parent-map
family at those sizes, rather than all bipartite eligibility graphs.
These computations supplement the symbolic proof; no floating-point result
is needed for an exact bound.

\section{Concluding remarks}

The exact formula shows that excluding short improving paths gives
more precise guarantees on matchable instances than the general
$1+1/r$ bound specialized to that class. The finite-cap gap is already
attained on trees with client degree at most two, and the leading
large-radius term is attained on paths. Thus these worst local optima
do not require a dense eligibility graph.

We leave two extensions for future work. An optimum with load above
one replaces the one-parent comparison graph by a graph with larger
indegrees, so the present ancestor state no longer applies directly.
A separate algorithmic question is the distributed complexity of finding
assignments satisfying the radius condition. The exact gaps identify the
best guarantee obtainable from that condition alone on the matchable class.

\section*{Funding}

This research did not receive any specific grant from funding agencies in
the public, commercial, or not-for-profit sectors.

\section*{Declaration of competing interest}

The authors declare no competing interests.

\section*{Declaration of generative AI and AI-assisted technologies}

During the preparation of this work, the authors used ChatGPT (OpenAI) to
assist with the development and checking of verification code and with
the mathematical and editorial preparation of the \LaTeX{} manuscript.
The authors reviewed and revised the output and take full responsibility
for the results, proofs, and final text.

\end{document}